\theoremstyle{plain}
\newtheorem{theorem}{Theorem}
\newtheorem{lemma}[theorem]{Lemma}
\newtheorem{corollary}[theorem]{Corollary}
\newtheorem{proposition}[theorem]{Proposition}
\newtheorem{observation}[theorem]{Observation}
\newtheorem{case}{Case}
\newtheorem{caseWithRef}{Case}
\newtheorem{construction}[theorem]{Construction}
\newcommand{\overbar}[1]{\mkern 5mu\overline{\mkern-5mu#1\mkern-5mu}\mkern 5mu}
\author[Tom\'a\v{s} Dvo\v{r}\'ak]{Tom\'a\v{s} Dvo\v{r}\'ak\hfil\hfil\hfil\hfill\hfil}
\title[Quadratic matchings extend to long cycles in hypercubes]
{Matchings of quadratic size extend to~long~cycles in hypercubes\,\begin{NoHyper}\thanks{%
Financial support from the Czech Science Foundation under the grant GA14-10799S is gracefully acknowledged.}\end{NoHyper}
}
\affiliation{Faculty of Mathematics and Physics, Charles University in Prague, Czech Republic\hfill\hfil\hfil\hfil}
\keywords{ Gray code, Hamiltonian cycle, hypercube, long cycle, matching, Ruskey and Savage problem}
\begin{document}

\publicationdetails{18}{2016}{3}{12}{1336}

\maketitle

\begin{abstract}

Ruskey and Savage in 1993 asked whether every matching in a hypercube can be extended to a Hamiltonian cycle. 
A~positive answer is known for perfect matchings, but the general case has been resolved only for matchings of linear size. In this paper we show that there is a quadratic function $q(n)$ such that every matching in the $n$-dimensional hypercube of size at most $q(n)$ may be extended to a cycle which covers at least $\frac34$ of the vertices.

\end{abstract}

\section{Introduction}
Frank Ruskey and Carla Savage in 1993 formulated the following problem \cite{RuSa}:  Does every matching in a hypercube extend to a Hamiltonian cycle? More than two decades have passed and the question still remains open. 
It may be of interest that a complementary problem on the existence of a Hamiltonian cycle avoiding a given matching in a hypercube has been already resolved \cite{DDGS09}.
  
An important step towards a solution to the Ruskey-Savage problem was made in 2007 by Fink who answered the question affirmatively for every perfect matching \cite{Fi07}.
Note that this  implies a~positive solution for every matching that extends to a perfect one, which includes e.g.~every induced matching \cite{VaWe}. However, this result does not immediately provide a~complete answer to the original question, as hypercubes contain matchings that are maximal with respect to inclusion, but still not perfect. Actually, to determine the minimum size of such a matching is another long-standing problem \cite{Fo}.  The simplicity and elegance of Fink’s method inspired several generalizations \cite{AAAHST15,FeSu09,Gre09}, but none of them addresses the extendability of imperfect matchings.

As far as arbitrary matchings are concerned, there are only partial results that deal with matchings of linear size.
The author of the current paper showed that a set $\mathcal{P}$ of at most $2n-3$ edges of the $n$-dimensional hypercube extends to a Hamiltonian cycle iff $\mathcal{P}$ induces a linear forest \cite{d-05}. This bound is sharp, as for every $n\ge3$ there is a non-extendable linear forest of $2n-2$ edges \cite{ck-05}. Of course, these edges do not form a matching, so this does not imply a negative answer to the Ruskey-Savage problem. In the case when $\mathcal{P}$ is a matching, the bound on $|\mathcal{P}|$ was improved to $3n-10$ by Wang and Zhang \cite{WaZha15}.

The purpose of this paper is to derive a quadratic upper bound on the size of a~matching that extends to a cycle which covers at least $\frac34$ of the vertices of the hypercube.
Our result is based on an inductive construction which combines a refinement of Fink's method for perfect matchings \cite{Fi09} with a lemma on hypercube partitioning due to Wiener \cite{Wie}.

\section{Preliminaries}
The graph-theoretic terms used in this paper but not defined below may be found e.g. in \cite{Bo}.
Throughout the paper, $n$ always denotes a positive integer while $[n]$ stands for the set $\{1,2,\dots,n\}$.

Vertex and edge sets of a graph G are denoted by $V(G)$ and $E(G)$, respectively. 
A~sequence $a=x_1,x_2,\dots,x_{n+1}=b$ of pairwise distinct vertices such
that $x_i$ and $x_{i+1}$ are adjacent for all $i\in[n]$ is a~\emph{path
between} $a$ and $b$ of \emph{length} $n$. 
We denote such a~path and its vertices by $P_{ab}$ and $V(P_{ab})$, respectively.
Let $P_{ab}$ and $P_{bc}$ be paths such that $V(P_{ab})\cap V(P_{bc})=\{b\}$. Then $P_{ab}+P_{bc}$ denotes the path between  $a$ and $c$, obtained as a~concatenation of $P_{ab}$
with $P_{bc}$ (where $b$ is taken only once). Observe that the operation $+$
is associative.
A \emph{cycle} of length $n$ is a sequence $x_1,x_2\dots,x_{n}$ of pairwise distinct vertices such that $x_1$ is adjacent to $x_{n}$ and $x_i$ is adjacent to $x_{i+1}$ for all $i\in[n]$. The sets of vertices $x_1,x_2\dots,x_{n}$ and edges $x_1x_2,x_2x_3,\dots,x_{n}x_1$ of a cycle $C$ are denoted by $V(C)$ and $E(C)$, respectively.

In this paper we deal with the $n$-dimensional hypercube $Q_n$ which is a graph with all $n$-bit strings as vertices, an edge joining two vertices whenever they differ in a single bit. 
Given a string $v=v_1v_2\dots v_n$ and a set $D\subseteq[n]$, we use $v_D$
to denote the string  $v_{j_1}v_{j_2}\dots v_{j_{d}}$ where $j_1,j_2,\dots,j_{d}$
is an increasing sequence of all elements of $D$.
Given a set $D\subseteq[n]$ of size $d$ and a vertex $u\in V(Q_{n-d})$, the
subgraph of $Q_n$ induced by the vertex set
$\{v\in V(Q_n)\mid v_{\bar D}=u\}$ where $\bar D=[n]\setminus D$ is
denoted by $Q_D(u)$ and called a \emph{subcube of dimension} $d$.
Subcubes $Q_D(u)$ and $Q_D(v)$ are called \emph{adjacent} if $u$
and $v$ are adjacent vertices of $Q_{n-d}$.
Note that
\begin{itemize}
\item $Q_D(u)$ is isomorphic to $Q_d$,
\item if $Q_D(u)$ and $Q_D(v)$ are adjacent subcubes, then an arbitrary
vertex in one of them has a unique neighbor in the other.
\end{itemize}
Given a set $\mathcal{S}\subseteq V(Q_n)$, $\mathcal{S}_D(u)$ denotes the set $\mathcal{S}\cap V(Q_D(u))$. 

To engineer our inductive construction, we employ the following result on hypercube partitioning due to Wiener \cite[Theorem~2.5]{Wie}, see also \cite[Section~4]{Wie13} for the proof. Although it was originally stated for set systems, here we provide an equivalent formulation using the terminology introduced above.
\begin{theorem}[\cite{Wie}]
\label{thm:wiener}
Let $\mathcal{S}$ be a set of vertices of $Q_n$ of size $s$
with $s\geq 2n$ and $d=\lceil\frac{n^2}{2s-n-2}\rceil$. Then there is a
set $D\subseteq[n]$ such that $|D|=d$ and $|\mathcal{S}_D(u)|\le d+1$
for every $u\in\{0,1\}^{n-d}$.
\end{theorem}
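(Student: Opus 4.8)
The natural route is the probabilistic method: choose the set $D\subseteq[n]$ with $|D|=d$ uniformly at random and argue that with positive probability the induced partition into the subcubes $\{Q_D(u)\mid u\in\{0,1\}^{n-d}\}$ has every block small. The governing statistic is the number of collisions
\[
N(D)=\sum_{u\in\{0,1\}^{n-d}}\binom{|\mathcal{S}_D(u)|}{2},
\]
the number of pairs $\{x,y\}\subseteq\mathcal{S}$ falling into a common subcube. Two points land in the same block exactly when every coordinate on which they differ lies in $D$, so a pair at Hamming distance $k$ is counted with probability $\binom{n-k}{d-k}/\binom{n}{d}=\prod_{j=0}^{k-1}\frac{d-j}{n-j}\le(d/n)^{k}$. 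First I would write $\mathbb{E}[N(D)]=\sum_{k\ge1}p_k\,\binom{n-k}{d-k}/\binom{n}{d}$, where $p_k$ counts the pairs of $\mathcal{S}$ at distance $k$, and note that these coefficients decrease in $k$, so that nearby pairs dominate the expectation.

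Because the weight concentrates on close pairs, the quantitative heart of the argument is to bound how clustered $\mathcal{S}$ can be, and the tool I would reach for is the \emph{edge-isoperimetric inequality} for the hypercube: an $s$-vertex subset induces at most $\tfrac12 s\log_2 s$ edges, with the initial segments (subcube-like sets) extremal. Controlling $p_1$ and the small-distance terms this way, and then feeding in $s\ge 2n$ together with the exact value $d=\lceil n^2/(2s-n-2)\rceil$, I would extract a single $D$ with atypically few collisions. The arithmetic of that threshold is, I expect, tuned precisely so that the relevant first moment drops below~$1$.

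The main obstacle—and where I expect the real work to lie—is that a bound on the total number of collisions $N(D)$ controls only the \emph{average} block size, never the maximum: one overloaded subcube of size $L$ contributes a mere $\binom{L}{2}$, so a small $N(D)$ is perfectly consistent with a single block far larger than $d+1$. To obtain the uniform per-block bound I would replace the pair count by a first moment over genuine witnesses of failure, namely the $(d+2)$-element subsets $R\subseteq\mathcal{S}$ that lie in a common $d$-dimensional subcube (equivalently, whose set of non-constant coordinates is contained in $D$); any $D$ avoiding all such $R$ satisfies $|\mathcal{S}_D(u)|\le d+1$ for every $u$, which is exactly the conclusion. The delicate point is that the naive count $\binom{s}{d+2}$ of candidate witnesses is hopelessly wasteful, since almost every $(d+2)$-subset spans far more than $d$ coordinates and can never be collapsed; the estimate must instead exploit the geometric constraint that a witness occupies a subcube of dimension at most $d$, and it is precisely here that the isoperimetric bound on the clustering of $\mathcal{S}$ and the chosen value of $d$ must be spent to drive the expected number of witnesses below~$1$.
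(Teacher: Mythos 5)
Your text is a research plan rather than a proof: the two estimates on which everything hinges---relating the clustering of $\mathcal{S}$ to an isoperimetric bound, and driving the expected number of witnesses below $1$---are announced but never carried out, and you yourself flag them as ``where the real work lies.'' (For the record, the paper under review contains no proof of this theorem either; it quotes it from Wiener and defers the proof to the cited reference \cite{Wie13}, so the comparison here is really against that argument, which does not use the random-choice framework you propose.) The parts you do compute are correct: the collapse probability $\binom{n-k}{d-k}/\binom{n}{d}$ for a pair at distance $k$, and the observation that bounding the total number of collisions only controls the \emph{average} fiber size.

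The decisive problem is that your proposed fix---a first moment over $(d+2)$-element witnesses $R\subseteq\mathcal{S}$ with $|{\dim}(R)|\le d$---is not merely unfinished but provably cannot work, no matter how carefully the witnesses are counted. Take $n$ large, $s=2n$ (so $d=\lceil\frac{n^2}{3n-2}\rceil\approx n/3$), and let $\mathcal{S}$ consist of $2n$ vertices inside a subcube of dimension $m=\lceil\log_2(2n)\rceil$. Every $(d+2)$-subset of $\mathcal{S}$ spans at most $m<d$ coordinates, so every one of the $\binom{2n}{d+2}\ge 2^{cn}$ such subsets is a \emph{genuine} witness in your sense; a uniformly random $D$ collapses any fixed one with probability at least $\binom{n-m}{d-m}/\binom{n}{d}\ge\bigl(\frac{d-m}{n}\bigr)^{m}\ge\frac14(2n)^{-2}$ for large $n$. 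Hence the expected number of collapsed witnesses is exponentially large---yet the conclusion of the theorem holds trivially for this $\mathcal{S}$ (choose $D$ disjoint from the $m$ spanning coordinates; then every $Q_D(u)$ meets $\mathcal{S}$ in at most one vertex), and in fact a typical random $D$ succeeds. The failure is structural: witnesses are totally correlated, since a single overloaded subcube containing $L$ points of $\mathcal{S}$ accounts for $\binom{L}{d+2}$ of them, so a union bound over witnesses is hopeless. Any salvage would have to change the statistic (for instance, count overloaded subcubes $Q_D(u)$ rather than point sets), and you would then still owe the proof that \emph{its} expectation is small for every admissible $\mathcal{S}$---an estimate nothing in your sketch addresses. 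Separately, the edge-isoperimetric inequality you invoke bounds the number of distance-$1$ pairs in $\mathcal{S}$ and is never connected to the quantity you actually need, namely the number of $(d+2)$-sets of small span.
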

Let $K(Q_n)$ denote the complete graph on the set of vertices of $Q_n$. The set $\dim(uv)$ of \emph{dimensions}  of an edge $uv\in E(K(Q_n))$, $u=u_1\dots u_n, v =v_1\dots v_n$, is defined by $\dim(uv)=\{i\in[n]\mid u_i\ne v_i\}$. An edge $uv$ is called \emph{short} if $|\dim(uv)|=1$ and \emph{long} otherwise.  
For a vertex $v\in V(Q_n)$ let $v^d$ denote the vertex of $Q_n$ such that $vv^d$ is a short edge of dimension $d$. 

A \emph{matching} is a set of pairwise non-adjacent edges. Given a matching $M$, we use $\bigcup M$ to denote the set of all vertices incident with an edge of $M$.
We say that a matching $M$ in $K(Q_n)$ is \emph{$d$-saturated} if every short edge of dimension $d$ not in $M$ is adjacent to some edge of $M$.

Note that removing all edges of some fixed dimension $d$ splits $Q_n$ into two $(n-1)$-dimensional subcubes $Q^0=Q_{\overbar{\{d\}}}(0)$ and $Q^1=Q_{\overbar{\{d\}}}(1)$. We use  $M^i_d$ to denote $M\cap E(K(Q^i))$ for both $i\in\{0,1\}$ and $M^2_d=M\setminus(M^0_d\cup M^1_d)$. 
\begin{observation}
\label{obser:saturation}
Let $d\in[n]$ and $M$ be a matching in $K(Q_n)$ containing $s$ short edges of dimension $d$. If $2|M|-s<2^{n-1}$, then $M$ is not $d$-saturated.
\end{observation}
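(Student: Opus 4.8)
The plan is to prove the contrapositive: assuming $M$ is $d$-saturated, I would deduce $2|M|-s\ge 2^{n-1}$. The starting point is the observation that $Q_n$ has exactly $2^{n-1}$ short edges of dimension $d$, since such an edge $vv^d$ is determined by the common value of the coordinates outside $\{d\}$, and there are $2^{n-1}$ choices for those. The hypothesis of $d$-saturation says precisely that each of these $2^{n-1}$ edges is \emph{covered}, meaning it either belongs to $M$ or shares a vertex with some edge of $M$.

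The key step is a counting argument. I would first note that every vertex $w\in V(Q_n)$ is incident with exactly one short edge of dimension $d$, namely $ww^d$. From this it follows that a short edge $vv^d$ is covered if and only if at least one of $v,v^d$ lies in $\bigcup M$: if $v\in\bigcup M$ then either $vv^d\in M$ or the edge of $M$ through $v$ meets $vv^d$ at $v$, and symmetrically for $v^d$; conversely any edge of $M$ adjacent to $vv^d$ must meet it in $v$ or in $v^d$. Thus the number of covered short edges of dimension $d$ equals the number of such edges having an endpoint in $\bigcup M$.

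To count these, I would sum over the $2|M|$ vertices of $\bigcup M$, assigning to each vertex $w$ its unique incident dimension-$d$ edge $ww^d$. This enumeration reaches every covered edge, but counts an edge twice exactly when \emph{both} of its endpoints lie in $\bigcup M$. Hence the number of covered edges equals $2|M|$ minus the number $B$ of short edges of dimension $d$ whose both endpoints lie in $\bigcup M$. The main point to pin down is a lower bound on $B$: each of the $s$ short edges of dimension $d$ that belong to $M$ has both endpoints in $\bigcup M$, so $B\ge s$. Consequently at most $2|M|-s$ short edges of dimension $d$ can be covered.

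Combining, $d$-saturation forces all $2^{n-1}$ short edges of dimension $d$ to be covered, whence $2^{n-1}\le 2|M|-s$, which is the contrapositive of the claim. The only delicate ingredient is the double-counting correction; once the characterization of covered edges in terms of $\bigcup M$ is in place, the inequality $B\ge s$ is immediate and the rest is bookkeeping.
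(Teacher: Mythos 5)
Your proof is correct and is essentially the same counting argument as the paper's: the paper identifies the $2^{n-1}$ short edges of dimension $d$ with the vertices of the subcube $Q_{\overbar{\{d\}}}(0)$ and observes that at most $2|M|-s$ of them meet $\bigcup M$, so some edge $uu^d$ avoids $\bigcup M$ entirely, while you phrase the same count contrapositively. If anything, your version is slightly more careful, since the paper asserts the count \emph{equals} $2|M|-s$ whereas (as your term $B\ge s$ makes explicit) it is in general only at most $2|M|-s$ --- which is all that is needed.
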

\begin{proof}
Put $Q^0=Q_{\overbar{\{d\}}}(0)$. 
Since $|\{u\in V(Q^0)\mid \{u,u^d\}\cap\bigcup M\ne\emptyset\}|=2|M|-s<2^{n-1}=|V(Q^0)|$, there must be a $u$ such that  $\{u,u^d\}\cap\bigcup M=\emptyset$. Hence $M$ is not $d$-saturated.
\end{proof}

The following properties of hypercubes shall be useful later.
\begin{proposition}
\label{prop:old}
Let $n\ge2$. Then
\begin{enumerate}[\upshape(1)]
  \item 
  \label{prop:old-separable-edges}
  {\rm\cite{DDGS09}} for every matching $M$ of size two in $Q_n$ there is $d\in[n]$ such that $|M^0_d|=|M^1_d|=1$, 
  \item 
  \label{prop:old-fink}
  {\rm\cite{Fi07}} every perfect matching of $K(Q_n)$ can be extended by short edges to a Hamiltonian cycle,
  \item
  \label{prop:old-wang}
  {\rm \cite{WaZha}}
every matching of size at most $2n-1$ in $Q_n$ can be extended to a Hamiltonian cycle.
\end{enumerate}
\end{proposition}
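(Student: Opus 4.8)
The statement bundles three facts of very different depth, so I would handle them separately. Parts~\ref{prop:old-fink} and~\ref{prop:old-wang} are the substantial ingredients, and I would import them as black boxes from \cite{Fi07} and \cite{WaZha} rather than reprove them. Part~\ref{prop:old-fink} is Fink's theorem; the proof in \cite{Fi07} runs by induction on $n$, roughly by splitting $Q_n$ along a dimension into $Q^0$ and $Q^1$, dispatching the matching edges that stay inside each half to the inductive hypothesis, accounting for the edges of $M$ crossing between the halves, and finally merging the two partial cycles into a single Hamiltonian cycle along a suitable pair of parallel short edges of the splitting dimension. Part~\ref{prop:old-wang} is the Wang--Zhang extension result \cite{WaZha}, which refines the author's earlier linear bound \cite{d-05} through a long case analysis. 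Reproving either is well beyond what a preliminaries section should carry, and since the whole construction is explicitly built on ``a refinement of Fink's method,'' invoking part~\ref{prop:old-fink} verbatim is exactly the intended use.

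Part~\ref{prop:old-separable-edges} is elementary and I would prove it directly. Write $M=\{e_1,e_2\}$ with $\dim(e_i)=\{d_i\}$, so that $e_1$ and $e_2$ are short edges. For any $d\ne d_i$ both endpoints of $e_i$ share the same $d$-th coordinate, which I call $c_i(d)\in\{0,1\}$; hence $e_i\subseteq V(Q^{c_i(d)})$ for the split along $d$, and so $e_i\in M^{c_i(d)}_d$. Consequently, if I can produce a dimension $d\notin\{d_1,d_2\}$ with $c_1(d)\ne c_2(d)$, then $e_1$ and $e_2$ fall into opposite halves and, since $|M|=2$, this forces $|M^0_d|=|M^1_d|=1$, as required.

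The existence of such a $d$ is forced by the matching property. Fix endpoints $a\in e_1$ and $b\in e_2$; since $a_d=c_1(d)$ and $b_d=c_2(d)$ whenever $d\notin\{d_1,d_2\}$, it suffices to find such a $d$ with $a_d\ne b_d$, i.e.\ with $\dim(ab)\not\subseteq\{d_1,d_2\}$. Suppose instead that $\dim(ab)\subseteq\{d_1,d_2\}$. Then the four vertices $a$, $a^{d_1}$, $b$, $b^{d_2}$ all agree on every coordinate outside $\{d_1,d_2\}$, and therefore lie in a common two-dimensional subcube spanned by the dimensions $d_1$ and $d_2$. If $d_1\ne d_2$, this subcube is a $4$-cycle on which the $d_1$-edge $e_1$ and the $d_2$-edge $e_2$ necessarily share a vertex; if $d_1=d_2$, then $b\in\{a,a^{d_1}\}=e_1$. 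Either way $e_1\cap e_2\ne\emptyset$, contradicting that $M$ is a matching. Hence the desired $d$ exists.

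I expect no real obstacle in part~\ref{prop:old-separable-edges} beyond keeping the small case distinction honest. The genuine depth of the proposition sits entirely in the imported parts~\ref{prop:old-fink} and~\ref{prop:old-wang}, of which part~\ref{prop:old-fink}---Fink's Hamiltonicity of perfect matchings---is the hard one and the engine of everything that follows in the paper.
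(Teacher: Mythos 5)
Your proposal is correct. For context: the paper itself offers no proof of this proposition at all --- all three parts are imported verbatim from the literature (\cite{DDGS09}, \cite{Fi07}, \cite{WaZha}), so your treatment of parts~(2) and~(3) as black boxes coincides exactly with the paper's approach, and your assessment that reproving them is out of scope matches the author's judgment. The only divergence is part~(1), which the paper also merely cites (to \cite{DDGS09}) but which you prove directly; your argument is sound. Since $M$ is a matching in $Q_n$ (not $K(Q_n)$), both edges are indeed short, and your reduction --- find $d\notin\{d_1,d_2\}$ on which chosen endpoints $a\in e_1$, $b\in e_2$ differ, which is independent of the choice of endpoints --- is the right one. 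The contradiction step is also airtight: if $\dim(ab)\subseteq\{d_1,d_2\}$, then $e_1$ and $e_2$ lie in a common $2$-dimensional subcube, where a $d_1$-edge and a $d_2$-edge of a $4$-cycle must intersect (and the degenerate case $d_1=d_2$ forces $b\in e_1$), contradicting disjointness; note this also correctly disposes of the boundary case $n=2$ with $d_1\ne d_2$, where no dimension outside $\{d_1,d_2\}$ exists, by showing that configuration simply cannot occur. What your route buys is self-containedness for the one genuinely elementary claim, at the cost of a few lines the paper chose to outsource; nothing is lost either way.
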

\section{Construction}
The following construction is a refinement of Fink's method \cite[Proof of Theorem~3]{Fi09} which was originally devised to extend perfect matchings in hypercubes to Hamiltonian cycles.
\begin{construction}
\label{construction}
Let $M$ be a matching in $K(Q_n)$, $n\ge2$, $d\in[n]$.\begin{enumerate}[\upshape1.]
\item 
Split $Q_n$ into subcubes $Q^0=Q_{\overbar{\{d\}}}(0)$ and $Q^1=Q_{\overbar{\{d\}}}(1)$.
\item
\label{construction:odd}
If $|M^2_d|$ is odd, select an edge of $E(Q_n)\setminus M$ of dimension $d$ which is not adjacent to any edge of $M$ and add it to $M^2_d$.
\item
\label{construction:empty}
If $M^2_d$ is empty, select two edges of $E(Q_n)\setminus M$ of dimension $d$ which are not adjacent to any edge of $M$ and add them to $M^2_d$.
\item
\label{construction:left}
Let $M^2_d=\{u_1v_1,u_2v_2,\dots,u_kv_k\}$ with $u_i\in V(Q^0)$ and $v_i\in V(Q^1)$ for all $i\in[k]$. \\
Select a perfect matching $P^0$ in the subgraph of $K(Q^0)$ induced by $u_1, u_2, \dots, u_k$. \\
Let $C^0$ be a cycle in $K(Q^0)$ such that $E(C^0)=M^0_d\cup P^0\cup S^0$ where $S^0\subseteq E(Q^0)$.
\item
\label{construction:right}
The removal of edges of $P^0$ breaks $C^0$ into pairwise disjoint paths, and we can without loss of generality assume that these are paths $P_{u_1u_2},P_{u_3u_4},\dots,P_{u_{k-1}u_k}$.  \\
Put $P^1=\{v_1v_2,v_3v_4,\dots,v_{k-1}v_k\}$. \\
Let $C^1$ be a cycle in $K(Q^1)$ such that $E(C^1)=M^1_d\cup P^1\cup S^1$ where $S^1\subseteq E(Q^1)$. 
\end{enumerate}
\end{construction}
Note that Construction~\ref{construction} is non-deterministic in the sense that edges selected in steps~\ref{construction:odd} and \ref{construction:empty} as well as a matching $P^0$ and cycles $C^0$ and $C^1$ formed in steps \ref{construction:left} and \ref{construction:right} may not be unique or may not even exist. If $C^0$ and $C^1$ are cycles of $K(Q^0)$ and $K(Q^1)$, respectively, defined in steps \ref{construction:left} and \ref{construction:right}, where $M^i_d$, $P^i$, $S^i$, $i\in\{0,1\}$ used in their definition are obtained by some execution of Construction~\ref{construction}, then the pair $(C^0,C^1)$ is called a \emph{$d$-extension} of the matching $M$ in $K(Q_n)$.
\begin{observation}
\label{observation:on-construction}
If $(C^0,C^1)$ is a $d$-extension of a matching $M$ in $K(Q_n)$, then there is a~cycle $C$ in~$K(Q_n)$ such that $E(C)=M\cup S$ where $S\subseteq E(Q_n)$ and $|V(C)|=|V(C^0)|+|V(C^1)|$.
\end{observation}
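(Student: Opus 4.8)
The plan is to exhibit the cycle $C$ explicitly as
\[
E(C)=\bigl(E(C^0)\setminus P^0\bigr)\cup\bigl(E(C^1)\setminus P^1\bigr)\cup M^2_d,
\]
where $M^2_d=\{u_1v_1,\dots,u_kv_k\}$ is the (possibly augmented) set of dimension-$d$ edges fixed in step~\ref{construction:left}, and then to verify three requirements: that $E(C)=M\cup S$ with $S\subseteq E(Q_n)$, that every vertex of $V(C^0)\cup V(C^1)$ has degree $2$ in $E(C)$, and that $E(C)$ is connected.

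First I would settle the two routine requirements. Since $P^0$ meets only the vertices $u_1,\dots,u_k$ while $M^0_d\cup S^0$ avoids them, discarding $P^0$ retains $M^0_d\cup S^0\subseteq E(C)$, and symmetrically $M^1_d\cup S^1\subseteq E(C)$; together with $M^2_d\supseteq M\setminus(M^0_d\cup M^1_d)$ this gives $M\subseteq E(C)$. The complement $S=E(C)\setminus M$ consists of $S^0\cup S^1$ together with the edges added in steps~\ref{construction:odd} and~\ref{construction:empty}, all of which are short, so $S\subseteq E(Q_n)$. For the degrees, each vertex has degree $2$ in its cycle $C^i$; removing $P^i$ drops exactly the vertices $u_i$ (resp.\ $v_i$) to degree $1$, and the incident crossing edge $u_iv_i$ restores them to degree $2$, while all other vertices are untouched. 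As $V(Q^0)\cap V(Q^1)=\emptyset$, the set $E(C)$ is a $2$-regular graph on the disjoint union $V(C^0)\cup V(C^1)$, hence a disjoint union of cycles with $|V(C^0)|+|V(C^1)|$ vertices; it remains only to show it is a single cycle.

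I expect connectivity to be the main obstacle. Deleting $P^0$ breaks $C^0$ into the paths $P_{u_1u_2},\dots,P_{u_{k-1}u_k}$ of step~\ref{construction:right}, and deleting $P^1$ breaks $C^1$ into $k/2$ paths whose endpoints are exactly $v_1,\dots,v_k$. Contracting each of these $k$ paths to a single node turns $E(C)$ into a $2$-regular multigraph whose edges are the crossing edges $u_iv_i$, and this multigraph is a single cycle if and only if $E(C)$ is. Identifying $u_i$ and $v_i$ with the index $i$, the contracted multigraph is governed by two perfect matchings on $[k]$: the pairing $\alpha=\{\{1,2\},\{3,4\},\dots\}$ coming from the $C^0$-paths, and the pairing $\beta$ coming from the $C^1$-paths. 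A standard contraction-of-alternating-cycles argument then shows that $E(C)$ is a single cycle precisely when $\alpha\cup\beta$ is a single cycle on $[k]$.

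The crux is therefore to prove that $\alpha\cup\beta$ is connected, and here I would use the cyclic order in which the deleted edges $P^1=\{v_1v_2,v_3v_4,\dots\}$ occur along $C^1$. Since $P^1$ is a matching contained in $C^1$, no two of its edges are consecutive, so going around $C^1$ the edges of $P^1$ alternate with the $k/2$ paths obtained by their deletion; consequently each such path joins an endpoint of one $P^1$-edge to an endpoint of the cyclically next one. Because the index pairs defining $P^1$ are exactly the pairs $\{1,2\},\{3,4\},\dots$ of $\alpha$, contracting the $\alpha$-edges identifies each $P^1$-edge with a single node, and $\beta$ then links these nodes in the cyclic order of the $P^1$-edges along $C^1$---that is, into one cycle. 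Hence $\alpha\cup\beta$ is a single cycle, $E(C)$ is connected, and $C$ is the desired cycle. The one point demanding care is this matching of the two pairings, which is precisely what step~\ref{construction:right} arranges by choosing $P^1$ to pair the $v_i$ along the same indices by which the $C^0$-paths pair the $u_i$.
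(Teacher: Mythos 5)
Your proposal is correct, and it produces exactly the cycle the paper constructs: your edge set $\bigl(E(C^0)\setminus P^0\bigr)\cup\bigl(E(C^1)\setminus P^1\bigr)\cup M^2_d$ coincides with the paper's $M\cup S^0\cup S^1\cup S^2$. The difference lies entirely in how single-cycle-ness is established. The paper never treats $C$ as an abstract edge set: it obtains $C$ from $C^1$ by replacing each edge $v_iv_j\in P^1$ with the path $v_i+P_{u_iu_j}+v_j$; since the paths $P_{u_iu_j}$ are the pairwise disjoint components of $C^0-P^0$ and lie entirely in $Q^0$, each replacement splices an internally disjoint path into an existing cycle, so the result is a single cycle by construction and connectivity never arises as a separate issue. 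You instead verify $2$-regularity and then prove connectivity by contracting the $C^0$- and $C^1$-paths and analyzing the union of the two pairings $\alpha$ and $\beta$ on $[k]$; your key observation --- that $\alpha\cup\beta$ is precisely the contraction of the single cycle $C^1$, because the $P^1$-pairs equal the $\alpha$-pairs by the choice made in Step~\ref{construction:right} of Construction~\ref{construction} --- is correct, and it is really the same insight as the paper's, proved explicitly rather than absorbed into the splicing viewpoint. What your route buys is a transparent identification of \emph{why} the index alignment in Step~\ref{construction:right} is the crux; what it costs is length. One small slip to fix: $M^0_d\cup S^0$ does \emph{not} avoid the vertices $u_1,\dots,u_k$, since each $u_i$ is incident with exactly one edge of $S^0$ (its second cycle edge in $C^0$); what you actually need, and what is true, is only that $P^0$ is edge-disjoint from $M^0_d\cup S^0$, so that $E(C^0)\setminus P^0=M^0_d\cup S^0$ still holds.
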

\begin{proof}
Referring to the notation of Construction~\ref{construction}, replace each edge $v_iv_j\in P^1$ in $C^1$ with the path $v_i+P_{u_iu_j}+v_j$. This extends $C^1$ to a cycle $C$ in $K(Q_n)$ such that $E(C)=M\cup S^0\cup S^1\cup S^2$ where $S^2$ consists of edges added in Step \ref{construction:odd} or \ref{construction:empty}. Then $S^0\cup S^1\cup S^2\subseteq E(Q_n)$ and $|V(C)|=|V(C^0)\,\mathbin{\dot{\cup}}\,V(C^1)|=|V(C^0)|+|V(C^1)|$. 
 \end{proof}
\section{Initial step}
The following application of Construction~\ref{construction} shall be useful as an initial step for the inductive proof of our main result.
\begin{lemma}
\label{lemma:n}
Every matching $M$ in $K(Q_n)$, $|M|\le n\ge2$, may be extended by short edges to a~cycle of~length at least $\frac{3}{4}|V(Q_n)|$. 
\end{lemma}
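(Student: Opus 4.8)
The plan is to prove the statement by induction on $n$, using Construction~\ref{construction} to split along a single dimension and Observation~\ref{observation:on-construction} to glue the two partial cycles back together. First I would dispose of the case that $M$ consists of short edges only: then $M$ is a matching of $Q_n$ with $|M|\le n\le 2n-1$, so by the theorem of Wang and Zhang (Proposition~\ref{prop:old}) it already extends to a \emph{Hamiltonian} cycle, covering all of $V(Q_n)$. A few smallest values of $n$ I would settle by hand; the case $n=2$ with a single long edge is instructive, since there the largest attainable cycle has exactly three vertices, so the constant $\tfrac34$ is forced and the base case is tight.

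For the inductive step, assume the statement in dimension $n-1$ and let $M$ be a matching in $K(Q_n)$ with $|M|\le n$ possessing at least one long edge. I would fix a suitable dimension $d$, run Construction~\ref{construction}, and use the inductive hypothesis to supply the two cycles of a $d$-extension $(C^0,C^1)$: inside $Q^0\cong Q_{n-1}$ the residual matching to be extended is $M^0_d\cup P^0$, and inside $Q^1\cong Q_{n-1}$ it is $M^1_d\cup P^1$. Here $P^0$ may be taken to be \emph{any} perfect matching on the $k$ side-$0$ endpoints $u_1,\dots,u_k$ (one always exists since $k$ is even), and $P^1$ is the pairing of the $v_i$ forced by $C^0$; what the inductive hypothesis actually needs is only that each residual matching has size at most $n-1$. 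Granting that, the hypothesis yields $|V(C^i)|\ge\tfrac34|V(Q^i)|$ for $i\in\{0,1\}$, and Observation~\ref{observation:on-construction} then produces a cycle $C$ with $E(C)=M\cup S$, $S\subseteq E(Q_n)$, and $|V(C)|=|V(C^0)|+|V(C^1)|\ge\tfrac34\bigl(|V(Q^0)|+|V(Q^1)|\bigr)=\tfrac34|V(Q_n)|$, as desired. Note that the crossing edges in $M^2_d$, in particular every long edge whose dimension contains $d$, are absorbed into $C$ without cost.

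The heart of the matter is thus the choice of $d$ guaranteeing $|M^0_d\cup P^0|,|M^1_d\cup P^1|\le n-1$. Writing $a=|M^0_d|$, $b=|M^1_d|$, $c=|M^2_d|$ before Steps~\ref{construction:odd}--\ref{construction:empty}, so that $a+b+c=|M|$, the two residual sizes are $a+\tfrac12(c+\varepsilon)$ and $b+\tfrac12(c+\varepsilon)$, where $\varepsilon\in\{0,1,2\}$ counts the crossing edges added to make $|M^2_d|$ positive and even. From the identity $\max(a,b)+c=|M|-\min(a,b)\le|M|$ one checks, handling the correction $\varepsilon$ according to the parity of $c$, that $\max(a,b)+\tfrac12(c+\varepsilon)\le n-1$ holds whenever $c\ge2$ (if $|M|=n$), $c\ge1$ (if $|M|=n-1$), and for \emph{every} $d$ (if $|M|\le n-2$). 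A dimension meeting these thresholds exists because a long edge is present: then $\sum_{d}|M^2_d|=\sum_{uv\in M}|\dim(uv)|\ge|M|+1$, so some $|M^2_d|\ge2$, which settles $|M|=n$; for $|M|=n-1$ the weaker requirement $c\ge1$ holds at some $d$ trivially, and for $|M|\le n-2$ there is nothing to arrange.

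The main obstacle I anticipate is precisely this extremal bookkeeping at $|M|=n$, where the slack is smallest and one genuinely needs the long edge to force a dimension with two crossings. Two routine matters remain to be verified along the way. First, the crossing edges demanded in Steps~\ref{construction:odd}--\ref{construction:empty} do exist: Observation~\ref{obser:saturation} shows that $M$ is not $d$-saturated as soon as $2|M|\le 2n<2^{n-1}$, and counting the short edges of dimension $d$ left free by $M$ supplies the one or two crossing edges required. Second, the few small dimensions for which $2n<2^{n-1}$ fails are absorbed into the base of the induction and checked directly, the pure short-edge subcase there again being covered by Wang and Zhang's theorem.
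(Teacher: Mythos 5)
Your inductive step for $n\ge5$ is sound, and in one respect cleaner than the paper's own argument: by disposing of all-short-edge matchings once and for all via Wang--Zhang (Proposition~\ref{prop:old}\eqref{prop:old-wang}), you can replace the paper's appeal to Proposition~\ref{prop:old}\eqref{prop:old-separable-edges} (needed there when $|M|=n$ and every dimension is used exactly once) by the pigeonhole estimate $\sum_d|M^2_d|=\sum_{uv\in M}|\dim(uv)|\ge|M|+1$, which forces some $|M^2_d|\ge2$. Your size bookkeeping --- $\max(a,b)+\tfrac12(c+\varepsilon)\le n-1$ under the thresholds $c\ge2$, $c\ge1$, none for $|M|=n$, $n-1$, $\le n-2$ respectively --- checks out, and for $n\ge5$ the inequality $2|M|\le2n<2^{n-1}$ does make Steps~\ref{construction:odd} and~\ref{construction:empty} feasible via Observation~\ref{obser:saturation}.

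The genuine gap is the base of your induction. Since $2n<2^{n-1}$ fails for $n\in\{3,4\}$, your saturation argument is unavailable there, and you propose to settle these dimensions ``by hand.'' That is not a routine finite check. For $n=4$ you must verify that \emph{every} matching of size at most $4$ in $K(Q_4)$ containing a long edge extends by short edges to a cycle of length at least $12$; the number of such matchings up to the automorphism group of $Q_4$ (order $384$) runs into the thousands, so this cannot be waved off as a direct inspection. These dimensions are precisely where the paper has to work hardest: it isolates the obstinate configurations (those in which every $d$ with $M^2_d\ne\emptyset$ has $|M^2_d|$ odd and $M$ $d$-saturated) and disposes of them with nontrivial ad hoc constructions --- exploiting $d$-saturation to produce a perfect matching of the form $\{u_1v_1^d,u_2v_2^d,u_3v_3^d,u_4v_4\}$ in a $3$-dimensional subcube, invoking Fink's theorem (Proposition~\ref{prop:old}\eqref{prop:old-fink}), and then performing explicit edge and path replacements to reinsert the long edges of $M$. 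Note also that the $n=4$ case is load-bearing for you, not peripheral: at $n=5$ the residual matchings $M^i_d\cup P^i$ have size up to $4$ and contain long edges (those of $P^i$), so they are not covered by your Wang--Zhang reduction and must be fed into the full $n=4$ statement. Until you supply an actual argument for $n\in\{3,4\}$ with long edges --- either structural, as in the paper, or a genuinely completed exhaustive verification --- the induction has no foundation.
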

\begin{proof}
We argue by induction on $n$. The case $n=2$ may be verified by a direct inspection.
Next assume that $n>2$ and that $M\ne\emptyset$, for otherwise we may use an arbitrary Hamiltonian cycle of $Q_n$. 
\begin{case}
$n=3$ and for every $d\in[3]$, 
either $|M^2_d|=\emptyset$,
or $|M^2_d|$ is odd and $M$ is $d$-saturated.
\end{case}
Recall that if $M$ is $d$-saturated, Observation~\ref{obser:saturation} says that $|M|\ge2+s/2$ where $s$ is the number of short edges of dimension $d$. Consequently, $|M|=2$ would mean that $M$ consists of two long edges, which together with $n=3$ implies that $|M^2_d|=2$ for some $d$, contrary to our assumption. Hence it must be the case that $|M|=3$. If all these three edges are short, $M$ extends to a Hamiltonian cycle by part \eqref{prop:old-wang} of~Proposition~\ref{prop:old}. If at least one edge is long, then there is $d\in[n]$ such that $|M^2_d|>1$ and therefore, by our assumption, $|M^2_d|=3$.

Hence we can assume that $M=M^2_d=\{u_iv_i\mid i\in[3]\}$ where $u_i$ lies in $Q^0=Q_{\overbar{\{d\}}}(0)$ for all $i\in[3]$. Since $M$ is $d$-saturated, one of these edges, say $u_1v_1$, must have the property that $v_1^d$ is not incident with any edge of $M$. Then $P^0=\{u_1v_1^d,u_2u_3$\} forms a perfect matching of $K(Q^0)$ and therefore, by the induction hypothesis, it may be extended by short edges to a Hamiltonian cycle $C^0$ of $Q^0$. Since $Q^1=Q_{\overbar{\{d\}}}(1)$ is a cycle, there exists a path $P_{u_2^du_3^d}$ in $Q^1$ avoiding $v_1$.
Replacing edges $u_1v_1^d$ and $u_2u_3$ of~$C^0$ with the paths $u_1,v_1,v_1^d$ and $u_2+P_{u_2^du_3^d}+u_3$, respectively, we extend $C^0$ to a cycle $C$ of length at least $7>\frac34|V(Q_4)|$ such that $E(C)=M\cup S$ where $S\subseteq E(Q_3)$ as required.
\begin{case}
$n=4$ and for every $d\in[4]$, 
either $|M^2_d|=\emptyset$,
or $|M^2_d|$ is odd and $M$ is $d$-saturated.
\end{case}
Recall that $M\ne\emptyset$ and therefore there must be a $d'$ such that $M$ is $d'$-saturated. By Observation~\ref{obser:saturation}, $|M|\ge4+s/2$ where $s$ is the number of short edges of $M$ of dimension $d'$. Since we assume that $|M|\le4$, it follows that $|M|=4$ and every edge $e\in M$ with $d'\in\dim(e)$ is long. But then the are distinct edges $e,e'\in M$ such that $\dim(e)$ and $\dim(e')$ share the same dimension $d$, which means that $|M^2_{d}|=3$.

Hence we can without loss of generality assume that $M=\{u_iv_i\mid i\in[4]\}$ where $u_4v_4\in M^0_d$ while $u_iv_i\in M^2_d$ where $u_i$ lies in $Q^0=Q_{\overbar{\{d\}}}(0)$ for all $i\in[3]$. 
Since $M$ is $d$-saturated, $P^0=\{u_1v_1^d,u_2v_2^d,u_3v_3^d,u_4v_4$\} forms a perfect matching of $K(Q^0)$. By part \eqref{prop:old-fink} of Proposition~\ref{prop:old}, $P^0$ may be extended by short edges to a Hamiltonian cycle $C^0$ of $Q^0$.  Since $|E(C^0)\setminus E(P^0)|=4$, there is an edge $uv\in E(C^0)\setminus E(P^0)$ which is not incident with $v_i^d$ for any $i\in[3]$. Replacing edges $uv$ and $u_iv_i^d$ of $C^0$ with the paths $u,u^d,v^d,v$ and $u_i,v_i,v_i^d$ for all $i\in[3]$, respectively, we extend $C^0$ to a cycle $C$ of length $13>\frac34|V(Q_4)|$ such that $E(C)=M\cup S$ where $S\subseteq E(Q_4)$ as required.

\begin{case}
Either $n\ge5$, or $n\in\{3,4\}$ and there is $d\in[n]$ such that $M^2_d\ne\emptyset$ and if $M$ is $d$-saturated then $|M^2_d|$ is even. 
\end{case}
We show that in this case there is a $d\in[n]$ such that  a $d$-extension of $M$ exists.
If $3\le n\le 4$, let $d$ be such that either $|M^2_d|$ is even and positive, or $|M^2_d|$ is odd and $M$ is not $d$-saturated. If $n\ge5$, select $d\in[n]$ in the following way:
If $|M|<n$, let $d$ be an arbitrary dimension of some edge of $M$. If $|M|=n$ and for each $i\in[n]$ there is exactly one edge $e\in M$ with $i\in\dim(e)$, then there must be two distinct short edges $e_1,e_2\in M$. Then use part \eqref{prop:old-separable-edges} of Proposition~\ref{prop:old} to select $d$ in such a way that these edges belong to distinct subcubes  $Q^0=Q_{\overbar{\{d\}}}(0)$ and $Q^1=Q_{\overbar{\{d\}}}(1)$. If none of the previous cases applies, $d$ may be selected such that $M$ contains at least two edges of dimension $d$.

Now verify the validity of all steps of Construction~\ref{construction}, using  the notation introduced there. First inspect Step~\ref{construction:odd}, i.e. the case when $|M^2_d|$ is odd. If $3\le n\le 4$, then $M$ is not $d$-saturated by the above choice of $d$. For $n\ge5$ we have
$
2|M|\le 2n<2^{n-1},
$
which means that by Observation~\ref{obser:saturation}, $M$ is not $d$-saturated either. It follows that Step~\ref{construction:odd} may be safely performed.  

Since $M^2_d\ne\emptyset$, Step~\ref{construction:empty} is irrelevant, and hence it remains to verify the validity of Steps~\ref{construction:left} and \ref{construction:right}. Note that both $M^0_d\cup P^0$ and $M^1_d\cup P^1$ are matchings in $K(Q^0)$ and $K(Q^1)$, respectively, while our choice of $d$ guarantees that their size does not exceed $n-1$ unless $n=4$ when it may be equal to $n$. By the induction hypothesis (or by part \eqref{prop:old-fink} of Proposition~\ref{prop:old} in the case that $n=4$), these matchings may be extended by short edges to cycles $C^i$ of lengths at least $\frac{3}{4}|V(Q^i)|$ for both $i\in\{0,1\}$. The desired cycle of length at least $\frac{3}{4}|V(Q_n)|$ extending $M$ by short edges then exists by Observation~\ref{observation:on-construction}.
\end{proof}
\section{Results}
We start with another application of Construction~\ref{construction}. This time, the assumption on matching size is replaced with a requirement of even distribution of its edges among 4-dimensional subcubes. 
\begin{lemma}
  \label{lemma:induction}
Let $M$ be a matching in $K(Q_n)$, $n\ge4$,  and $D\subseteq[n]$ such that $|D|=4$ and $|\bigcup M_D(u)|\le7$ for every $u\in\{0,1\}^{n-4}$.  Moreover, there are at most two distinct $u,v\in\{0,1\}^{n-4}$ such that $|\bigcup M_D(u)|,|\bigcup M_D(v)|\in\{6,7\}$, and if they both exist, then the subcubes $Q_D(u)$ and $Q_D(v)$ are adjacent. 
Then there is a set $S\subseteq E(Q_n)$ such that $M\cup S$  forms a cycle in $K(Q_n)$ of length at least $\frac{3}{4}|V(Q_n)|$. 
\end{lemma}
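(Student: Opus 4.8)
The plan is to induct on $n$, reducing the $n$-dimensional case to two $(n-1)$-dimensional ones through a single application of Construction~\ref{construction}. For the base case $n=4$ there is exactly one subcube $Q_D(u)=Q_n$, so the hypothesis $|\bigcup M|\le 7$ forces $|M|\le 3\le n$, and Lemma~\ref{lemma:n} immediately yields a cycle of length at least $\frac34|V(Q_n)|$ extending $M$ by short edges. For the inductive step $n\ge 5$ I first record the global bound obtained by summing the hypothesis over the $2^{n-4}$ subcubes: at most two of them contribute $7$ and the rest at most $5$, so $2|M|=|\bigcup M|\le 5\cdot 2^{n-4}+4<2^{n-1}$. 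By Observation~\ref{obser:saturation} this means $M$ is not $d$-saturated for any $d\in[n]$, so the crossing edges demanded by Steps~\ref{construction:odd} and~\ref{construction:empty} are always available; in fact at least $2^{n-1}-|\bigcup M|\ge 3\cdot 2^{n-4}-4$ short edges of any fixed dimension $d$ have both endpoints unmatched.

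The engine of the induction is the observation that the auxiliary matchings $P^0,P^1$ of Steps~\ref{construction:left}--\ref{construction:right} only re-pair vertices that are already matched: the vertex set covered by $M^0_d\cup P^0$ equals $\bigcup M^0_d\cup\{u_1,\dots,u_k\}$, which (before Steps~\ref{construction:odd}--\ref{construction:empty}) is exactly the set of $M$-matched vertices lying in $Q^0$. Hence, choosing the split dimension $d$ from $\bar D$, the number of matched vertices in every $4$-dimensional subcube $Q_D(u)$ is preserved up to the endpoints introduced in Steps~\ref{construction:odd}--\ref{construction:empty} when $M$ is replaced by the effective matchings $M^0_d\cup P^0$ and $M^1_d\cup P^1$, and $D$ (relabelled inside $[n-1]$) witnesses the hypothesis of Lemma~\ref{lemma:induction} in each half. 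To keep the budget of heavy subcubes---those with $6$ or $7$ matched vertices, of which there are at most two---balanced, I would choose $d$ as follows: if two heavy subcubes exist they are adjacent, hence differ in a single coordinate $j\in\bar D$, and taking $d=j$ places exactly one of them in each of $Q^0,Q^1$; if at most one heavy subcube is present, any $d\in\bar D$ works, and whenever possible I would pick a $d$ with $|M^2_d|$ even and positive so that Steps~\ref{construction:odd}--\ref{construction:empty} become vacuous.

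Once the split is fixed and Steps~\ref{construction:odd}--\ref{construction:empty} are (if needed) carried out, each half is an $(n-1)$-cube whose effective matching satisfies the hypotheses of Lemma~\ref{lemma:induction}. When $n\ge 6$ the induction hypothesis applies directly to each half; when $n=5$ each half is a single $4$-cube whose effective matching has at most $\lfloor 9/2\rfloor=4$ edges (at most seven original matched vertices together with at most two added ones), so Lemma~\ref{lemma:n} applies instead. Either way we obtain cycles $C^0,C^1$ with $|V(C^i)|\ge\frac34\cdot 2^{n-1}$, and Observation~\ref{observation:on-construction} combines the $d$-extension $(C^0,C^1)$ into a cycle $C$ with $E(C)=M\cup S$, $S\subseteq E(Q_n)$, and $|V(C)|=|V(C^0)|+|V(C^1)|\ge\frac34\cdot 2^n=\frac34|V(Q_n)|$, which is the desired conclusion.

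I expect the main obstacle to be the bookkeeping forced by Steps~\ref{construction:odd}--\ref{construction:empty}, since an added crossing edge contributes a new matched vertex to a subcube in each half and can turn a subcube holding $5$ matched vertices into a heavy one, or a heavy one already holding $7$ into an inadmissible $8$; the invariant ``at most two heavy subcubes, adjacent if two'' leaves only a narrow margin, so these additions must be placed deliberately. I would resolve this using the slack in the hypothesis together with the abundance of free dimension-$d$ edges: each added endpoint should be routed into a subcube with at most $4$ matched vertices (so that it stays light) or, failing that, into an already-heavy subcube holding only $6$ (so that no new heavy subcube is created), and in the two-edge case of Step~\ref{construction:empty} the two added edges should be chosen so that their $Q^0$-endpoints lie in distinct, lightly occupied subcubes, and likewise in $Q^1$, rather than doubling up in a single subcube. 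Verifying that such placements always exist---in particular in the densest admissible configuration, where almost every subcube already carries $4$ or $5$ matched vertices, and in the case where $M$ lives entirely inside the subcubes, which forces $M^2_d=\emptyset$ for every $d\in\bar D$---is the delicate heart of the argument.
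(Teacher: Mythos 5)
Your overall skeleton --- induction on $n$, base case $n=4$ via Lemma~\ref{lemma:n}, splitting along a dimension $d\in\bar D$ chosen so that two adjacent heavy subcubes (if present) land in different halves, handling $n=5$ separately via Lemma~\ref{lemma:n}, and assembling the two cycles with Observation~\ref{observation:on-construction} --- is exactly the paper's. But the step you yourself flag as ``the delicate heart of the argument'' is precisely what the paper's proof supplies and your proposal does not: where to place the edges added in Steps~\ref{construction:odd} and~\ref{construction:empty} so that each half still satisfies the invariant ``at most two subcubes with $6$ or $7$ matched vertices, and adjacent if there are two''. Moreover, the concrete placement rule you sketch fails in exactly the dense configuration you single out. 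If every $4$-dimensional subcube carries exactly $5$ matched vertices (an admissible situation), there is no subcube with at most $4$ matched vertices and no heavy subcube holding $6$, so neither branch of your rule (``keep it light'' / ``feed an existing $6$'') is available; and your instruction to spread the two edges of Step~\ref{construction:empty} over \emph{distinct} lightly occupied subcubes is counterproductive: it turns two $5$-subcubes of $Q^0$ into $6$-subcubes that need not be adjacent, destroying the invariant in $Q^0$.

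The paper's resolution is the opposite of spreading: fix a single pair of mirror subcubes $Q_D(w)$ and $Q_D(w^d)$ and route \emph{all} added edges between them, choosing $w$ so that $Q_D(w)$ is adjacent to the heavy subcube $Q_D(u)$ of its half but distinct from $Q_D(v)$ (Case~\ref{lemma:induction-A}), respectively from $Q_D(u^d)$ (Case~\ref{lemma:induction-B}); then $Q_D(w^d)$ is automatically adjacent to $Q_D(v)=Q_D(u^d)$ in the other half. Since the chosen pair holds at most $5$ matched vertices per subcube in Cases~\ref{lemma:induction-A}--\ref{lemma:induction-C} (at most $7$ when $n=5$), the count $2^4-14\ge2$ guarantees the one or two free dimension-$d$ edges inside this single pair, and after the additions each half has at most two heavy subcubes which are adjacent, so the induction hypothesis applies. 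One accepts creating new heavy subcubes but controls \emph{where} they appear; trying to avoid creating them, as you propose, cannot work in the all-$5$ configuration. This adjacency-based choice of $w$ is the missing idea, and without it (or a substitute argument for that configuration) your induction does not close.
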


\begin{proof}
We argue by induction on $n$. For $n=4$ we have $|\bigcup M|=|\bigcup M_D(u)|\le7$, which means that $|M|\le3$ and the statement follows from Lemma~\ref{lemma:n}. For $n>4$ we show that there is a $d$-extension of $M$ for certain $d$.  To that end, consider four cases. 
\begin{caseWithRef}
\label{lemma:induction-A}
$n>5$ and there are $u\ne v\in\{0,1\}^{n-4}$ such that $|\bigcup M_D(u)|,|\bigcup M_D(v)|\in\{6,7\}$.
 \end{caseWithRef}
\begin{caseWithRef}
\label{lemma:induction-B}
$n>5$ and there  is exactly one $u\in\{0,1\}^{n-4}$ such that $|\bigcup M_D(u)|\in\{6,7\}$.
 \end{caseWithRef}
\begin{caseWithRef}
\label{lemma:induction-C}
$n>5$ and $|\bigcup M_D(u)|\le5$ for every $u\in\{0,1\}^{n-4}$.
\end{caseWithRef}
\begin{caseWithRef}
\label{lemma:induction-D}
$n=5$.
\end{caseWithRef}
Note that in Case~\ref{lemma:induction-A}, the subcubes $Q_D(u)$ and $Q_D(v)$ are adjacent by our assumption, and hence we can select $d$ as the dimension of the edge $uv$. Otherwise $d$ may be an arbitrary element of $\bar{D}$ (in Case~\ref{lemma:induction-D} the only one).

Now go through the steps of Construction~\ref{construction}, using  the notation introduced there. 
To perform Step~\ref{construction:odd} or \ref{construction:empty}, we need to find one or two non-adjacent short edges of dimension $d$, not intersecting any edge of $M$. To that end, select $w\in\{0,1\}^{n-4}$ such that
\begin{itemize}
 \item
 in Case~\ref{lemma:induction-A}, subcube $Q_D(w)$ is adjacent to $Q_D(u)$ but different from $Q_D(v)$,
 \item
 in Case~\ref{lemma:induction-B}, $Q_D(w)$ is adjacent to $Q_D(u)$ but different from $Q_D(u^d)$,
 \item
  in Case~\ref{lemma:induction-C}, $w$ may be arbitrary,
  \item
  in Case~\ref{lemma:induction-D}, $Q_5$ is partitioned into subcubes $Q_D(w)$ and $Q_D(w^d)$.
\end{itemize}
Note that then
\begin{math}
|\bigcup M_D(w)|,|\bigcup M_D(w^d)|\le
\begin{cases}5& \text{in Cases \ref{lemma:induction-A}--\ref{lemma:induction-C},}\\
7 & \text{in Case \ref{lemma:induction-D}.}
\end{cases}
\end{math}
\newline
It follows that
\begin{align*}
 |\{x\in V(Q_D(w))\mid \{x,x^d\}\cap{\textstyle\bigcup}M=\emptyset\}|
 \ge&\ |V(Q_D(w))|-|{\textstyle\bigcup}M_D(w)|-|{\textstyle\bigcup}M_D(w^d)|\ge\\
 \ge& \ 2^4-14\ge2.
\end{align*}
Hence we can always select $x\in V(Q_D(w))$ (if $|M^2_d|$ is odd) or distinct $x,y\in V(Q_D(w))$ (if $M^2_d$ is empty)
such that $xx^d$ or $xx^d,yy^d$ are short edges of dimension $d$ not intersecting any edge of $M$. Hence Steps~\ref{construction:odd} and \ref{construction:empty} can be safely performed.
 
It remains to verify the validity of Steps~\ref{construction:left} and \ref{construction:right}. Note that both $M^0_d\cup P^0$ and $M^1_d\cup P^1$ are matchings in $K(Q^0)$ and $K(Q^1)$, respectively, inheriting the required partition into subcubes of dimension four, where each subcube --- in Cases~\ref{lemma:induction-A}, \ref{lemma:induction-B} and \ref{lemma:induction-C} --- contains no more than 7 vertices incident with edges of $M$. Moreover, the only subcubes with 6 or 7 such vertices might be 
\begin{itemize}
 \item
$Q_D(u)$ adjacent with $Q_D(w)$ in $Q^0$, and $Q_D(v)$ adjacent with $Q_D(w^d)$ in $Q^1$ (Case~\ref{lemma:induction-A}),
\item
$Q_D(u)$ adjacent with $Q_D(w)$ in $Q^0$, and $Q_D(w^d)$ in $Q^1$ (Case~\ref{lemma:induction-B}),
\item
$Q_D(w)$ in $Q^0$, and $Q_D(w^d)$ in $Q^1$ (Case~\ref{lemma:induction-C}),
\end{itemize}
without loss of generality assuming that $Q_D(w)$ lies in $Q^0=Q_{\overbar{\{d\}}}(0)$.
By the induction hypothesis, these matchings may be extended by short edges to cycles $C^i$ of lengths at least $\frac{3}{4}|V(Q^i)|$ for both $i\in\{0,1\}$. 

Case~\ref{lemma:induction-D} requires a special attention. Recall that in this case we had $Q_5$ partitioned into two subcubes such that $|Q_D(w)|,|Q_D(w^d)|\le7$. If $|M^2_d|$ was odd, Step~\ref{construction:odd} could have increased this number to 8. If $M^2_d$ was empty, both $|Q_D(w)|$ and $|Q_D(w^d)|$ were even and therefore at most 6. Step~\ref{construction:empty} would then increase this number to at most 8.  Consequently,  $M^0_d\cup P^0$ and $M^1_d\cup P^1$ in Steps~\ref{construction:left} and \ref{construction:right} are matchings in 4-dimensional subcubes $K(Q^0)$ and $K(Q^1)$, respectively, of sizes not exceeding $8/2=4$. By Lemma~\ref{lemma:n}, these matchings may be extended by short edges to cycles $C^i$ in $K(Q^i)$ of lengths at least $\frac{3}{4}|V(Q^i)|$ for both $i\in\{0,1\}$. 

In all cases, the desired cycle of length at least $\frac{3}{4}|V(Q_n)|$ extending $M$ by short edges exists by Observation~\ref{observation:on-construction}.
\end{proof}

It remains to use Theorem~\ref{thm:wiener} to transform the assumptions of the previous lemma into an upper bound on the matching size. 
\begin{theorem}
\label{thm:Induction}
Let $M$ be a matching in $K(Q_n)$, $n\ge2$, such that $|\bigcup M|< \frac{n^2}{6}+\frac{n}{2}+1$.  Then there is a set $S\subseteq E(Q_n)$ such that $M\cup S$  forms a cycle in $K(Q_n)$ of length at least $\frac{3}{4}|V(Q_n)|$. 
\end{theorem}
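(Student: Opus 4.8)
The plan is to deduce Theorem~\ref{thm:Induction} from Lemma~\ref{lemma:induction} by using Wiener's partitioning result (Theorem~\ref{thm:wiener}) to produce a four-dimensional splitting set~$D$ that makes the edges of~$M$ sufficiently evenly distributed. First I would dispose of the small cases. For very small $n$ the hypothesis $|\bigcup M|<\frac{n^2}{6}+\frac{n}{2}+1$ forces $|M|$ to be so small that the conclusion already follows from Lemma~\ref{lemma:n} or from part~\eqref{prop:old-wang} of Proposition~\ref{prop:old}; I expect $n\le 5$ or so to be handled directly, since Wiener's theorem requires both $s\ge 2n$ and that the resulting dimension $d=\lceil n^2/(2s-n-2)\rceil$ actually equal~$4$, neither of which need hold when $n$ is tiny.

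For the main range of~$n$, set $s=|\bigcup M|$ and apply Theorem~\ref{thm:wiener} with this value of~$s$. The key arithmetic step is to check that the hypothesis $|\bigcup M|<\frac{n^2}{6}+\frac{n}{2}+1$ is exactly calibrated so that $d=\lceil\frac{n^2}{2s-n-2}\rceil=4$: rearranging the bound on~$s$ should give $2s-n-2<\frac{n^2}{3}$, whence $\frac{n^2}{2s-n-2}>3$ and the ceiling is at least~$4$; a matching lower bound on~$s$ (or a direct estimate) should pin the ceiling down to exactly~$4$, so that Wiener's theorem yields a set $D$ with $|D|=4$ and $|\mathcal S_D(u)|\le d+1=5$ for every~$u$, where $\mathcal S=\bigcup M$. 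This immediately gives $|\bigcup M_D(u)|\le 5\le 7$ for all~$u$, and in particular \emph{no} subcube has $6$ or~$7$ incident vertices. Hence the extra distribution hypothesis of Lemma~\ref{lemma:induction} (at most two ``heavy'' subcubes, and if two then adjacent) holds vacuously.

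With such a~$D$ in hand, Lemma~\ref{lemma:induction} applies verbatim and produces the desired set $S\subseteq E(Q_n)$ with $M\cup S$ a cycle of length at least $\frac34|V(Q_n)|$, completing the proof. The only place where care is needed is the interface between Wiener's hypothesis and ours: Theorem~\ref{thm:wiener} presupposes $s\ge 2n$, so I would separate off the regime $|\bigcup M|<2n$, where the matching has at most $n-1$ edges and the much stronger conclusion follows from part~\eqref{prop:old-wang} of Proposition~\ref{prop:old} (a Hamiltonian cycle, in fact).

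I expect the main obstacle to be purely the ceiling computation, namely verifying that the quantity $\frac{n^2}{6}+\frac{n}{2}+1$ in the hypothesis is the precise threshold that forces $d=4$ rather than $3$ or~$5$. This is a routine but delicate inequality manipulation: one must confirm both that $s<\frac{n^2}{6}+\frac{n}{2}+1$ keeps $\frac{n^2}{2s-n-2}>3$ (so $d\ge4$) and that $s\ge 2n$ keeps $\frac{n^2}{2s-n-2}\le 4$ for the relevant~$n$ (so $d\le4$), with any boundary values of~$n$ where these estimates are tight checked by hand. Once the bookkeeping for these edge cases is settled, the theorem reduces cleanly to the two lemmas already established.
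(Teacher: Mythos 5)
Your overall strategy (Wiener's theorem feeding into Lemma~\ref{lemma:induction}) is the same as the paper's, but the arithmetic claim on which your reduction rests is false, and it is precisely the point where the paper does something you missed. You apply Theorem~\ref{thm:wiener} directly to $\mathcal{S}=\bigcup M$ with $s=|\bigcup M|$ and assert that $s\ge 2n$ forces $\frac{n^2}{2s-n-2}\le 4$. It does not: $s\ge 2n$ only gives $\frac{n^2}{2s-n-2}\le\frac{n^2}{3n-2}\approx\frac{n}{3}$. For instance, with $n=100$ and $s=200$ Wiener's theorem returns $d=\lceil 10000/298\rceil=34$, i.e.\ a partition into $34$-dimensional subcubes with up to $35$ marked vertices each, to which Lemma~\ref{lemma:induction} (which demands $|D|=4$ and at most $5$, or in boundary cases $7$, marked vertices per subcube) simply does not apply. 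To force $d\le4$ one needs $2s-n-2\ge\frac{n^2}{4}$, i.e.\ $s\ge\frac{n^2}{8}+\frac{n}{2}+1$, so the whole regime $2n\le s<\frac{n^2}{8}+\frac{n}{2}+1$ is left uncovered by your argument, and your case split at $s<2n$ does not come close to closing it.

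The missing idea is a padding trick: since Wiener's conclusion $|\mathcal{S}_D(u)|\le d+1$ is inherited by subsets, one should apply Theorem~\ref{thm:wiener} not to $\bigcup M$ itself but to an arbitrary superset $\mathcal{S}\supseteq\bigcup M$ of size exactly $\lceil\frac{n^2}{6}+\frac{n}{2}\rceil$; such a superset exists precisely because $|\bigcup M|<\frac{n^2}{6}+\frac{n}{2}+1$. This choice of $|\mathcal{S}|$ pins the ceiling down: $2|\mathcal{S}|-n-2$ lies in $[\frac{n^2}{3}-2,\frac{n^2}{3})$, so $\lceil\frac{n^2}{2|\mathcal{S}|-n-2}\rceil=4$, and $|\mathcal{S}|\ge2n$ holds for all $n>8$. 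Wiener then yields $|D|=4$ with $|\bigcup M_D(u)|\le|\mathcal{S}_D(u)|\le5$ for every $u$, and Lemma~\ref{lemma:induction} finishes exactly as you intended, the cases $n\le8$ following from Lemma~\ref{lemma:n}. A secondary error: in your small-$s$ regime you invoke part~\eqref{prop:old-wang} of Proposition~\ref{prop:old}, but that result (Wang--Zhang) concerns matchings in $Q_n$, whereas $M$ here is a matching in $K(Q_n)$ and may contain long edges; the correct fallback there would have been Lemma~\ref{lemma:n} --- though with the padding trick no such case split is needed at all.
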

\begin{proof}
For $n\le8$ the statement of the theorem follows from Lemma~\ref{lemma:n}. For $n>8$ select a~set $\mathcal{S}$ such that $\bigcup M\subseteq\mathcal{S} \subseteq V(Q_n)$ and $|\mathcal{S}|=\lceil \frac{n^2}{6}+\frac{n}{2}\rceil$. 
Then 
$
\lceil\frac{n^2}{2|\mathcal{S}|-n-2}\rceil=4
$
and therefore by Theorem~\ref{thm:wiener}, there is a~set $D\subseteq[n]$ such that $|D|=4$ and $|\bigcup M_D(u)|\le|\mathcal{S}_D(u)|\le5$ for every $u\in\{0,1\}^{n-4}$. The statement of the theorem now follows from Lemma~\ref{lemma:induction}.
\end{proof}

Since the set of matchings in $K(Q_n)$ includes all matchings in $Q_n$, as a corollary we obtain the main result of this paper. 
\begin{corollary}
\label{cor:Main}
Every matching $M$ in $Q_n$ such that $n\ge2$ and  $|M|< \frac{n^2}{12}+\frac{n}{4}+\frac12$
can be extended to a~cycle of length at least $\frac{3}{4}|V(Q_n)|$. 
\end{corollary}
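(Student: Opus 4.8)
The plan is to deduce this directly from Theorem~\ref{thm:Induction}, so that the only actual work is translating between the matching-size hypothesis stated here and the vertex-count hypothesis required there. The reduction rests on two elementary observations, already signalled in the sentence preceding the statement, and involves no new combinatorics.

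First I would note that every matching $M$ in $Q_n$ is also a matching in $K(Q_n)$. Indeed, the edges of $Q_n$ are precisely the short edges of $K(Q_n)$, so $M\subseteq E(Q_n)\subseteq E(K(Q_n))$, and the defining non-adjacency condition is identical in both graphs. Hence $M$ is a legitimate input for Theorem~\ref{thm:Induction}.

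Next I would observe that, because the edges of a matching are pairwise non-adjacent, they share no endpoints, so each of the $|M|$ edges contributes exactly two distinct vertices to $\bigcup M$; that is, $|\bigcup M|=2|M|$. Multiplying the hypothesis $|M|<\frac{n^2}{12}+\frac{n}{4}+\frac12$ by two then gives
\[
|{\textstyle\bigcup}M|=2|M|<\frac{n^2}{6}+\frac{n}{2}+1,
\]
which is exactly the hypothesis of Theorem~\ref{thm:Induction}. Applying that theorem produces a set $S\subseteq E(Q_n)$ for which $M\cup S$ is a cycle in $K(Q_n)$ of length at least $\frac34|V(Q_n)|$. Since both $M\subseteq E(Q_n)$ and $S\subseteq E(Q_n)$, this cycle uses only hypercube edges and is therefore a cycle of $Q_n$ that extends $M$, as required.

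I expect no genuine obstacle here: the entire content of the result already resides in Theorem~\ref{thm:Induction} (and, behind it, in Lemmas~\ref{lemma:induction} and \ref{lemma:n}). The single point that deserves a moment's care is the identity $|\bigcup M|=2|M|$, which fails for arbitrary edge sets in $K(Q_n)$ but holds for matchings precisely because adjacent edges are excluded; this is exactly what permits the quadratic bound on $|\bigcup M|$ to be halved into the quadratic bound on $|M|$ appearing in the corollary.
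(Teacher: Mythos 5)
Your proposal is correct and is exactly the derivation the paper intends: the remark preceding the corollary (``the set of matchings in $K(Q_n)$ includes all matchings in $Q_n$'') together with the doubling identity $|\bigcup M|=2|M|$ reduces the corollary to Theorem~\ref{thm:Induction}, whose conclusion already guarantees $M\cup S\subseteq E(Q_n)$, so the cycle lies in $Q_n$. Nothing is missing; this is the same one-step reduction the paper performs.
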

\acknowledgements
The author is much grateful to the anonymous referees for their helpful comments and suggestions that led to numerous improvements in the presentation of this paper. 

\end{document}